%
%
\documentclass[aps,pre,preprint,superscriptaddress,showpacs]{revtex4-1}

\usepackage{latexsym,amsmath,amssymb,amsthm,bm,booktabs,enumerate,url}
\usepackage[dvipdfmx]{graphicx}

\newtheorem{thm}{Theorem}
\newtheorem{lem}{Lemma}

\newtheorem{problem}{Problem}


\begin{document}


\title{Maximizing Barber's bipartite modularity is also hard}


\author{Atsushi Miyauchi}
\email[]{miyauchi.a.aa@m.titech.ac.jp}
\affiliation{Graduate School of Decision Science and Technology, Tokyo Institute of Technology,  2-12-1 Ookayama, Meguro-ku, Tokyo 152-8552, Japan}
\affiliation{JST, ERATO, Kawarabayashi Large Graph Project, c/o Global Research Center for Big Data Mathematics,
NII, 2-1-2 Hitotsubashi, Chiyoda-ku, Tokyo 101-8430, Japan}
\author{Noriyoshi Sukegawa}
\email[]{sukegawa.n.aa@m.titech.ac.jp}
\affiliation{Graduate School of Decision Science and Technology, Tokyo Institute of Technology, 2-12-1 Ookayama, Meguro-ku, Tokyo 152-8552, Japan}


\date{\today}

\begin{abstract}
Modularity introduced by Newman and Girvan~[Phys. Rev. E \textbf{69}, 026113 (2004)] is a quality function for community detection. Numerous methods for modularity maximization have been developed so far. In 2007, Barber~[Phys. Rev. E \textbf{76}, 066102 (2007)] introduced a variant of modularity called bipartite modularity which is appropriate for bipartite networks. Although maximizing the standard modularity is known to be NP-hard, the computational complexity of maximizing bipartite modularity has yet to be revealed. In this study, we prove that maximizing bipartite modularity is also NP-hard. More specifically, we show the NP-completeness of its decision version by constructing a reduction from a classical partitioning problem. 
\end{abstract}

\pacs{89.75.Hc, 89.20.Ff, 02.10.Ox}

\maketitle

\section{Introduction} \label{sec: introduction}
Networks have attracted much attention from diverse fields such as physics, informatics, chemistry, biology, sociology, and so forth. Many complex systems arising in such fields can be represented as networks, and analyzing the structures and dynamics of these networks provides meaningful information about the underlying systems~\cite{Ne03, Ne09}.

In network analysis, one of the most fundamental issues now is finding community structures. Roughly speaking, communities are the sets of vertices densely connected inside, but sparsely connected with the rest of the network. Community detection analysis is increasingly applied in various fields. For more details, see the useful survey by Fortunato~\cite{Fo10} with over 450 references.

Community detection is now often conducted through maximizing a quality function called \textit{modularity} introduced by Newman and Girvan~\cite{NeGi04}. This function was solely a quality measure at first, but nowadays it is widely used as an objective function of optimization problems for finding community structures. Modularity represents the sum, over all communities, of the fraction of the number of the edges connecting vertices in a community minus the expected fraction of the number of such edges assuming that they are put at random with the same distribution of vertex degree. Let us consider an undirected network $G=(V, E)$ consisting of $n=|V|$ vertices and $m = |E|$ edges, and take a division $\mathcal{C}$ of $V$, then modularity $Q$ can be written as:
\begin{equation*} \label{def: modularity}
Q(\mathcal{C}) = \sum_{C \in \mathcal{C}} \left(\frac{m_C}{m} - \left(\frac{D_C}{2m}\right)^2 \right), 
\end{equation*}
where $m_C$ is the number of all the edges connecting vertices in community $C$, and $D_C$ is the sum of the degrees of all the vertices in community $C$. In 2008, Brandes \textit{et al.}~\cite{BrDeGaGoHoNiWa08} provided the first computational complexity result for modularity maximization. More precisely, they showed that modularity maximization is NP-hard. In other words, unless $\text{P} = \text{NP}$, there exists no modularity maximization algorithm that simultaneously satisfies the following: (i) finds a division with maximal modularity (ii) in time polynomial in $n$ and $m$ (iii) for any networks. Numerous heuristics based on greedy techniques~\cite{NeGi04, ClNeMo04, BlGuLaLe08}, simulated annealing~\cite{GuAm05, MaDo05, MeAcDo05}, spectral optimization~\cite{Ne06}, extremal optimization~\cite{DuAr05}, dynamical clustering~\cite{BoIvLaPlRa07}, mathematical programming~\cite{AgKe08, CaHaLi11, MiMi13}, and other techniques have been developed. In addition, a few exact algorithms~\cite{XuTsPa07, AlCaCaHaPeLi10} have also been proposed.

In recent years, some authors have reported that modularity is not perfect because it has two drawbacks: the resolution limit~\cite{FoBa07} and degeneracies~\cite{GoMoCl10}. The former means that, when the number of edges is large, small communities tend to be put together even if they are cliques connected by only one edge. The latter means that there exist a large number of nearly optimal divisions in terms of modularity maximization, which makes finding communities with maximal modularity extremely difficult. Nevertheless, modularity maximization is regarded as the most popular approach for community detection.

There are some variants of modularity such as ones for multi-scale community detection~\cite{ReBo06, ArFeGo08, HeKuKaSa08, PoLa11} and ones applicable to weighted or directed networks~\cite{Ne04, ArDuFeGo07, LeNe08, KiSoJe10}.  In 2007, Barber~\cite{Ba07} proposed one of such variants called \textit{bipartite modularity} for community detection in bipartite networks. Needless to say, the standard modularity is applicable to bipartite networks. However, it does not reflect a structure and restrictions of bipartite networks, that is, the vertices in a bipartite network can be divided into two disjoint sets of red and blue vertices such that every edge connects a red vertex and a blue vertex. 
Barber's bipartite modularity $Q_b$ does reflect them, and it can be represented as: 
\begin{equation*} \label{def: bipartite modularity}
Q_b(\mathcal{C}) = \sum_{C \in \mathcal{C}} \left(\frac{m_C}{m} - \frac{R_CB_C}{m^2} \right), 
\end{equation*}
where $R_C$ is the sum of the degrees of all the red vertices in community $C$, and $B_C$ is the same for the blue vertices. It can be seen that the terms $2m$ and $D_C^2$ in the standard modularity are replaced by $m$ and $R_CB_C$, respectively. These modifications are due to a structure and restrictions of bipartite networks. As the standard modularity, many approaches have been proposed so far~\cite{Ba07, ZhZhGuZh11, CoHa13} because bipartite networks arise in various real-world systems. We note that there is another variant for bipartite networks proposed by Guimer\`a, Sales-Pardo, and Amaral~\cite{GuPaAm07}, which is also often employed. 


As mentioned above, maximizing the standard modularity is known to be NP-hard. On the other hand, the computational complexity of maximizing bipartite modularity has yet to be revealed. In 2011, Zhan \textit{et al}.~\cite{ZhZhGuZh11} stated that maximizing the standard modularity can be reduced to maximizing bipartite modularity. If this is correct, then we can conclude that maximizing bipartite modularity is NP-hard. However, as pointed out by Costa and Hansen~\cite{CoHa11}, their analysis includes a crucial error. In 2013, Costa and Hansen~\cite{CoHa13} stated that the computational complexity of maximizing bipartite modularity still remains open. 

In this study, we prove that maximizing bipartite modularity is also NP-hard. To this end, we show the NP-completeness of its decision version by constructing a reduction from a classical partitioning problem. We note that our analysis is based on that of Brandes \textit{et al}.~\cite{BrDeGaGoHoNiWa08} who succeeded to show the NP-hardness of maximizing the standard modularity. 


\section{NP-completeness} \label{sec: np-completeness}
In what follows, we study the following problem which is the decision version of maximizing bipartite modularity.
\begin{problem}[BIMODULARITY]
Given a bipartite network $G=(V, E)$ and a real number K, does there exist a division $\mathcal{C}$ of $V$ such that $Q_b(\mathcal{C}) \geq K$?
\end{problem}

Our analysis employs the following partitioning problem as Brandes \textit{et al.}~\cite{BrDeGaGoHoNiWa08} did for the standard modularity.

\begin{problem}[3-PARTITION]
Given a set of $3k$ positive integers $A = \{a_1, a_2, \dots, a_{3k}\}$ such that $a = \sum_{i=1}^{3k} a_i = kb$ and $b/4 < a_i < b/2$ for $i=1, 2, \dots, 3k$, for some integer $b$, does there exist a partition of $A$ into $k$ sets such that the sum of the numbers in each set is equal to $b$?
\end{problem}
3-PARTITION is NP-complete in the strong sense~\cite{GaJo79}, which means that the problem cannot be solved even in pseudo-polynomial time, unless $\text{P} = \text{NP}$. Therefore, to show the NP-completeness of BIMODULARITY, it is enough to construct a pseudo-polynomial time reduction from 3-PARTITION. In other words, we need to show that a given instance $A$ of 3-PARTITION can be transformed into a certain instance $(G(A), K(A))$ of BIMODULARITY such that $G(A)$ has a division $\mathcal{C}$ of $V$ which satisfies $Q_b(\mathcal{C}) \geq K(A)$ if and only if $A$ can be partitioned into $k$ sets with sum equal to $b$ each.

We initially propose a procedure for generating appropriate bipartite network $G(A)$ from $A = \{a_1, a_2, \dots, a_{3k}\}$ as follows:
\begin{description}
\item[Step 1] Construct $k$ complete bipartite networks (\textit{bicliques} for short) $K_1, K_2, \dots, K_k$ consisting of $a$ red vertices and $a$ blue vertices. 
\item[Step 2] For each $a_i \in A$, put a red vertex $x_i$ and a blue vertex $y_i$. These are termed \textit{element vertices}. 
\item[Step 3] For $i = 1, 2, \dots, 3k$, connect $x_i$ to $a_i$ blue vertices in each of $k$ bicliques constructed in Step~1 such that each blue vertex in bicliques is connected to exactly one red element vertex. For $i = 1, 2, \dots, 3k$, connect $y_i$ to $a_i$ red vertices in each of $k$ bicliques constructed in Step~1 in a similar manner.
\item[Step 4] For $i = 1, 2, \dots, 3k$, connect the pair of element vertices $x_i$ and $y_i$.
\item[Step 5] For $i = 1, 2, \dots, 3k$, construct a star $X_i$ consisting of one blue internal vertex and $a^2/7$ red leaves, and construct a star $Y_i$ consisting of one red internal vertex and $a^2/7$ blue leaves. (Note that we can assume that $a^2$ is a multiple of 7 because all instances of 3-PARTITION can be transformed into one that satisfies it.) 
\item[Step 6] For $i = 1, 2, \dots, 3k$, connect $x_i$ to the internal vertex of $X_i$, and connect $y_i$ to the internal vertex of $Y_i$.
\end{description}

This procedure generates a bipartite network $G(A)$ consisting of 
\begin{equation*}
n = \frac{6}{7}ka^2 + 2ka + 12k
\end{equation*}
vertices and 
\begin{equation*}
m = \frac{13}{7}ka^2 + 2ka + 9k
\end{equation*}
edges. Clearly, it can be done in pseudo-polynomial time, that is, polynomial time in the sum of the input values of $A$. We note that each vertex of the bicluques $K_1, K_2, \dots, K_k$ has degree $a + 1$, and for each $a_i \in A$ the element vertices $x_i$ and $y_i$ have degrees $ka_i + 2$. In Fig.~\ref{fig: gadget}, $G(A)$ constructed from $A = \{2,2,2,2,3,3\}$ is shown as an example. 
\begin{figure}
\includegraphics[width=14.5cm]{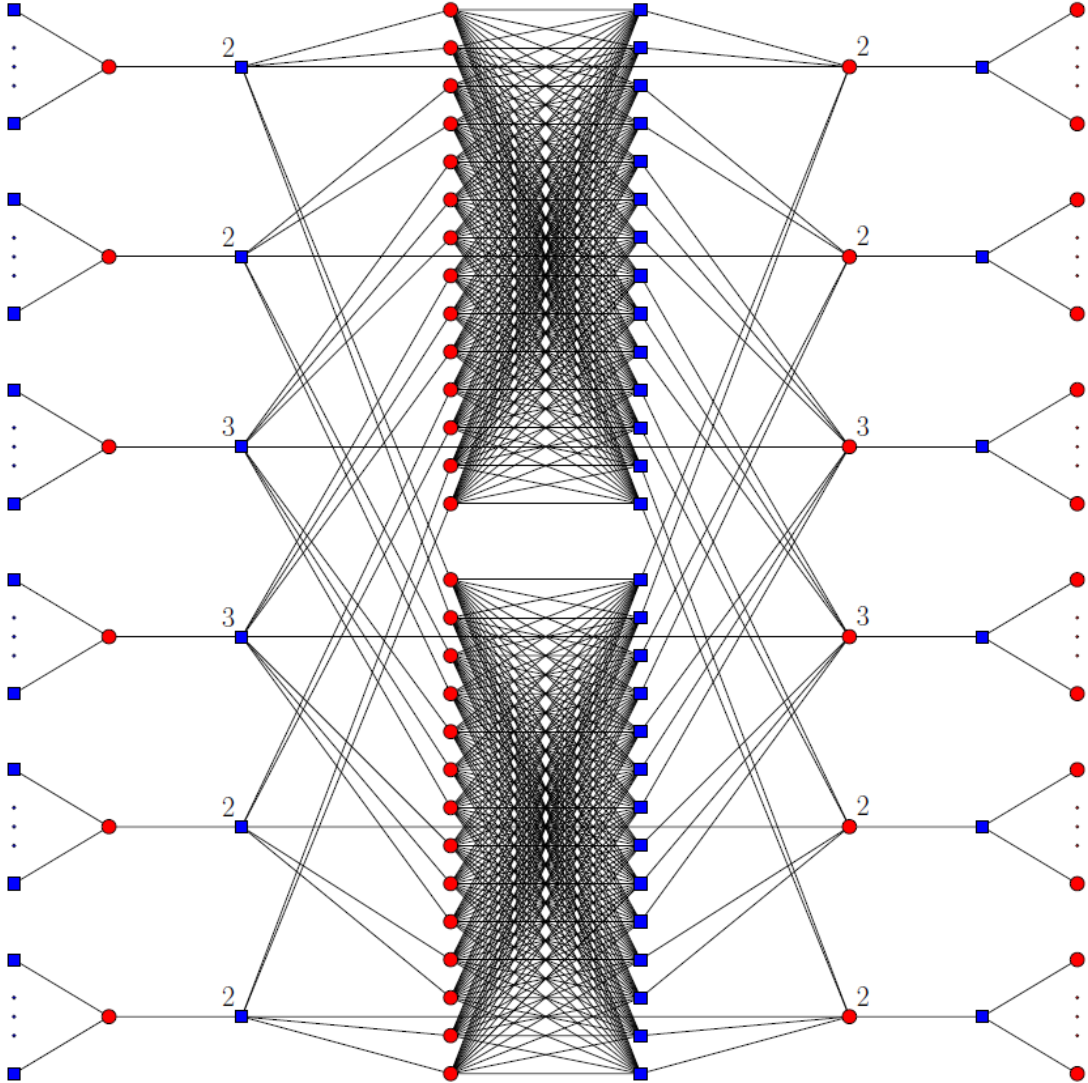}
\caption{(Color online) Bipartite network $G(A)$ constructed from $A = \{2,2,2,2,3,3\}$. Labels of vertices represent corresponding elements in $A$.}
\label{fig: gadget}
\end{figure}

Before determining appropriate parameter $K(A)$ for the instance of BIMODULARITY, we observe several conditions satisfied by divisions of  $G(A)$ with maximal bipartite modularity.

\begin{lem} \label{lem: 1}
In any division of $G(A)$ with maximal bipartite modularity, none of the bicliques $K_1, K_2, \dots, K_k$ is divided.
\end{lem}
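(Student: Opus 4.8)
The statement is about the structure of bipartite-modularity-optimal divisions of the gadget $G(A)$: each biclique $K_j$ must lie entirely inside one community. The natural strategy is a local-exchange (swapping/merging) argument: assume toward a contradiction that some biclique $K_j$ is split across two or more communities in an optimal division $\mathcal{C}$, and show that consolidating the split pieces into a single community strictly increases $Q_b$, contradicting optimality. So first I would fix an optimal $\mathcal{C}$ and a biclique $K_j$ that is not contained in a single $C \in \mathcal{C}$; I would partition the vertex set of $K_j$ according to which community each of its vertices belongs to, and analyze the contribution of these pieces to $Q_b$.

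**Key steps.** (1) Write the change $\Delta Q_b$ caused by merging all communities that intersect $K_j$ into one community. Using $Q_b(\mathcal{C}) = \sum_C (m_C/m - R_C B_C/m^2)$, the $m_C/m$ terms can only increase under a merge (we gain all edges of $K_j$ that ran between the pieces, plus possibly nothing is lost), while the penalty term changes by $-[(\sum R_{C_i})(\sum B_{C_i}) - \sum R_{C_i}B_{C_i}]/m^2 = -[\sum_{i \neq i'} R_{C_i}B_{C_{i'}}]/m^2$. So I need: (gain in internal edges of $K_j$ between pieces) $\cdot m > \sum_{i\neq i'} R_{C_i} B_{C_{i'}}$. (2) Bound the right-hand side from above: every vertex of $K_j$ has degree exactly $a+1$, so the red-degree sum of $K_j$'s red vertices is $a(a+1)$ and likewise for blue; hence $\sum_{i\neq i'} R_{C_i} B_{C_{i'}} \le (\text{total red degree in the merged community})\cdot(\text{total blue degree})$, which I would crudely bound, but more carefully the cross terms involving only $K_j$-vertices are at most $[a(a+1)]^2/4$ by AM–GM over the split. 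The subtlety is that the communities $C_i$ may also contain vertices outside $K_j$, contributing extra red/blue degree; I must handle those cross terms too, showing they are dominated because each side of the exchange can instead be absorbed appropriately — or, more cleanly, argue by moving just the smaller pieces one at a time. (3) Bound the left-hand side from below: since $K_j$ is complete bipartite on $a + a$ vertices, if it is split into parts with $r_1, \dots$ red and $s_1, \dots$ blue vertices, the number of internal $K_j$-edges lost by the split is $a^2 - \sum_i r_i s_i \ge a^2/2$ (again by a convexity/AM–GM estimate when there is a genuine split), so the gain in $m_C$ is at least of order $a^2$. Combining, the edge gain times $m$ is of order $a^2 \cdot m = \Theta(k a^4)$, which dominates the penalty increase of order $(a^2)^2 = a^4$ since $m = \frac{13}{7}ka^2 + 2ka + 9k$ grows with $k$; more precisely I would verify $a^2/2 \cdot m > $ (penalty bound) for all valid instances, i.e. $k \ge 1$ and $b > 4$ so $a = kb$ is large enough. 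The star gadgets $X_i, Y_i$ of size $a^2/7$ are exactly calibrated to make $m$ large relative to $a^2$, which is why they appear in the construction.

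**Main obstacle.** The delicate point is step (2): controlling the cross-penalty terms when the communities straddling $K_j$ also contain many outside vertices (element vertices, stars, or parts of other bicliques). A naive bound $R_C B_C \le m^2$ is useless. I expect the right move is \emph{not} to merge all straddling communities at once, but to take a single biclique vertex (or a minimal piece of $K_j$) sitting in the "wrong" community and move it to the community containing the largest piece of $K_j$, then show this single move already increases $Q_b$: the gain in internal edges is at least $a$ (a biclique vertex has $a$ neighbors inside $K_j$), while the penalty change is controlled by that vertex's degree $a+1$ times the opposite-color degree of the target community, which in turn I can bound using $a(a+1) + (\text{element-vertex degrees}) + \dots$ — all $O(ka^2) = O(m)$, so dividing by $m^2$ gives $O(a/m)$, beaten by the gain $a/m$ only if constants work out. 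If the per-vertex move is too tight, I would instead move a whole monochromatic slice of $K_j$ at once to amplify the edge gain quadratically while keeping the degree sum linear. Getting these constants to close — in particular confirming that the fixed ratios ($a^2/7$ leaves, degree $a+1$) in the construction were chosen precisely so that the inequality holds — is the real work; the rest is bookkeeping with the formula for $Q_b$.
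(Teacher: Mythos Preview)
Your merge transformation is the wrong move, and the obstacle you flag in step~(2) is fatal rather than merely delicate. Nothing yet proved rules out the scenario in which the pieces of $K_j$ sit in communities $C_1,\dots,C_l$ each of which already contains another entire biclique $K_{j'}$; then every $R_{C_i},B_{C_i}$ is at least $a(a+1)$, so the cross-penalty $\sum_{i\neq i'} R_{C_i}B_{C_{i'}}/m^2$ is of order $l^2 a^4/m^2=\Theta(1)$ (since $m=\Theta(ka^2)$ and $l$ can be as large as $k$), while the edge gain is at most $a^2/m=O(1/k)$. In this case merging strictly \emph{decreases} $Q_b$, so no contradiction follows. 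Your fallback single-vertex move has the same defect: moving a $K_j$-vertex of degree $a+1$ into a target community whose opposite-colour degree sum may already be $\Theta(a^2)$ incurs a penalty change of order $a^3/m^2$ that is not beaten by an edge gain of only $b^{\max}/m$, where $b^{\max}\le a$ can be as small as $a/l$. Separately, your lower bound $a^2-\sum_i r_i s_i\ge a^2/2$ is false: with a single blue vertex of $K_j$ separated ($r_1=a$, $s_1=a-1$, $r_2=0$, $s_2=1$) one has $\sum_i r_i s_i=a(a-1)$ and the gain is exactly $a$, not $\Theta(a^2)$.

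The missing idea is to \emph{extract} $K_t$ as a fresh standalone community rather than to merge anything: delete the $K_t$-vertices from each $C_i$ to form residual communities $C'_1,\dots,C'_l$ and add $K_t$ itself as a new community. This keeps the non-$K_t$ contents of the different $C_i$ in separate communities, so no cross-penalties between them are created; the only new penalty term is the fixed $(a+1)^2a^2/m^2$ for the $K_t$-community, while each residual penalty drops by $(a+1)(R_i b_i + B_i r_i)-(a+1)^2 r_i b_i$. The paper then bounds $R_i\ge (a+1)r_i+kf_{i2}$ and $B_i\ge (a+1)b_i+kf_{i1}$ (using that any element vertex with $f$ edges into $K_t$ has $f$ edges into every other biclique as well), together with $f_{i1}\le r_i$, $f_{i2}\le b_i$, and $\sum_i r_i b_i\le a(a-1)$, to show $\Delta>0$ once $k>14$. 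It is this extraction move, not a merge or a vertex swap, that makes the constants close.
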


\begin{proof}
Let us consider an arbitrary division $\mathcal{C}$. Suppose that a biclique $K_t$ is divided into $l$ communities with $l > 1$ in division $\mathcal{C}$. We denote the communities containing vertices of $K_t$ by $C_1, C_2, \dots, C_l$. The contribution of $C_1, C_2, \dots, C_l$ to $Q_b$ can be written as:
\begin{equation*}
\frac{1}{m} \sum_{i=1}^l m_i - \frac{1}{m^2} \sum_{i=1}^l R_i B_i,
\end{equation*}
where $m_i$ is the number of the edges connecting vertices in $C_i$, $R_i$ is the sum of the degrees of the red vertices in $C_i$, and $B_i$ is the same for the blue vertices.

Transform $C_1, C_2, \dots, C_l$ into $C'_1, C'_2, \dots, C'_l$ by removing all the vertices of $K_t$ from each community. We construct a new division $\mathcal{C'}$ by replacing $C_1, C_2, \dots, C_l$ in $\mathcal{C}$ with $K_t, C'_1, C'_2, \dots, C'_l$. For $i = 1, 2, \dots, l$, we denote the number of the red vertices removed from $C_i$ by $r_i$, the same for the blue vertices by $b_i$, and the number of the edges between vertices of $K_t$ in $C_i$ and the element vertices in $C_i$ by $f_i$. Then, the decrement, due to the transformation from $\mathcal{C}$ into $\mathcal{C}'$, of the number of the edges within communities is given by $\sum_{i=1}^l f_i$. On the other hand, the increment of the number of such edges can be represented as $\sum_{i=1}^l \sum_{j \neq i}^l r_i b_j$ because biclique $K_t$ is added to $\mathcal{C}'$ as a new community. Additionally, as for the sum of the degrees of $C_i$, the red one decreases $(a + 1)r_i$ and the blue one decreases $(a + 1)b_i$ because each vertex of $K_t$ has degree $a + 1$. From the above, the contribution of $K_t, C'_1, C'_2, \dots, C'_l$ to $Q_b$ is calculated by 
\begin{eqnarray*}
\begin{split}
&\frac{1}{m} \left(\sum_{i=1}^l m_i - \sum_{i=1}^l f_i + \sum_{i=1}^l \sum_{j \neq i}^l r_i b_j \right) \\
&-\frac{1}{m^2} \left((a + 1)^2 a^2 + \sum_{i=1}^l \left(R_i - (a + 1)r_i \right) \left(B_i - (a + 1)b_i \right) \right).
\end{split}
\end{eqnarray*}
Thus, we see that 
\begin{eqnarray*}
\Delta&:=&Q_b(\mathcal{C}') - Q_b(\mathcal{C}) \\
&=&\frac{1}{m} \left(\sum_{i=1}^l \sum_{j \neq i}^l r_i b_j - \sum_{i=1}^l f_i \right) \\
&&+ \frac{1}{m^2} \left( (a + 1) \left(\sum_{i=1}^l R_i b_i + \sum_{i=1}^l B_i r_i - (a + 1) \sum_{i=1}^l r_i b_i \right) - (a + 1)^2 a^2 \right).
\end{eqnarray*}

Now, for $i = 1, 2, \dots, l$, decompose the set of the edges enumerated by $f_i$ into the set of $f_{i1}$ edges and the set of $f_{i2}$ edges which are incident to red vertices and blue vertices of $K_t$, respectively. Then, we obtain that $R_i \geq (a + 1)r_i + kf_{i2}$ for $i=1,2,\dots,l$. This inequality holds because $C_i$ has at least $r_i$ red vertices of $K_t$ with degree $a + 1$. Moreover, it also contains some red element vertices which have at least $f_{i2}$ edges connected to biclique $K_t$, and such element vertices again have at least $f_{i2}$ edges to each of the other $k - 1$ bicliques. Arguing similarly for blue vertices, we also obtain that $B_i \geq (a + 1)b_i + kf_{i1}$ for $i=1,2,\dots,l$. Therefore, it holds that
\begin{equation*}
\sum_{i=1}^l R_i b_i \geq \sum_{i=1}^l \left((a + 1)r_i + kf_{i2} \right) b_i = (a + 1) \sum_{i=1}^l r_i b_i + k \sum_{i=1}^l f_{i2} b_i, 
\end{equation*}
and 
\begin{equation*}
\sum_{i=1}^l B_i r_i \geq \sum_{i=1}^l \left((a + 1)b_i + kf_{i1} \right) r_i = (a + 1) \sum_{i=1}^l r_i b_i + k \sum_{i=1}^l f_{i1} r_i.
\end{equation*}
Using these inequalities and the following equalities
\begin{equation*}
\sum_{i=1}^l \sum_{j \neq i}^l r_i b_j = \sum_{i=1}^l \sum_{j=1}^l r_i b_j - \sum_{i=1}^l r_i b_i = a^2 - \sum_{i=1}^l r_i b_i, 
\end{equation*}
we see that 
\begin{eqnarray*}
\Delta &\geq& \frac{1}{m} \left(a^2 - \sum_{i=1}^l r_i b_i - \sum_{i=1}^l f_i \right)\\
	&& + \frac{1}{m^2} \left( (a + 1)^2\sum_{i=1}^l r_i b_i + k(a + 1)\sum_{i=1}^l (f_{i1} r_i + f_{i2} b_i) - (a + 1)^2a^2\right)\\
&=&\frac{1}{m^2} \left( ma^2 - (a + 1)^2 a^2 - (m - (a + 1)^2) \sum_{i=1}^l r_i b_i \right. \\ 
&&\left. - m \sum_{i=1}^l f_i + k(a + 1) \sum_{i=1}^l (f_{i1} r_i + f_{i2} b_i) \right).
\end{eqnarray*}
Now, focusing on the last two terms in the above parenthesis, we see that  
\begin{eqnarray*}
&&-m \sum_{i=1}^l f_i + k(a + 1) \sum_{i=1}^l (f_{i1} r_i + f_{i2} b_i) \\
&=&\sum_{i=1}^l f_{i1} \left(k(a + 1) r_i - m \right) + \sum_{i=1}^l f_{i2} \left(k(a + 1) b_i - m \right) \\
&\geq&\sum_{i=1}^l r_i \left(k(a + 1) r_i - m \right) + \sum_{i=1}^l b_i \left(k(a + 1) b_i - m \right) \\
&=&k(a + 1) \sum_{i=1}^l \left(r_i^2 + b_i^2 \right) - 2ma \\
&\geq&2k(a+1)\sum_{i=1}^l r_ib_i - 2ma.
\end{eqnarray*}
The first equality follows because $f_i = f_{i1} + f_{i2}$ for $i=1, 2, \dots, l$. The next inequality follows because we have $r_i \geq f_{i1}$ and $b_i \geq f_{i2}$ for $i=1, 2, \dots, l$. Therefore, it holds that 
\begin{eqnarray*}
\Delta &\geq& \frac{1}{m^2} \left( ma^2 - (a+1)^2a^2 - 2ma -\left(m - (a+1)^2 -2k(a+1) \right) \sum_{i=1}^l r_ib_i \right) \\
&\geq& \frac{a}{m^2} \left( ma - (a+1)^2a - 2m -\left(m - (a+1)^2 -2k(a+1) \right) (a - 1) \right) \\
&=&\frac{a}{m^2} \left(2k(a+1)a - m - (a + 1)^2 -2k(a+1)\right) \\
&=&\frac{a}{m^2} \left( \left(\frac{1}{7} a^2 - 2a - 2\right) k - a^2 -2a -1 \right).
\end{eqnarray*}
The second inequality holds because $m - (a + 1)^2 -2k(a + 1) > 0$ as $k \geq 1$ and $\sum_{i=1}^l r_ib_i \leq a(a-1)$ as biclique $K_t$ was divided into at least two communities in $\mathcal{C}$. We can assume that $k$ is greater than any constant for all relevant instances of 3-PARTITION. Thus, employing $k > 14$ for an example, we see that 
\begin{equation*}
\Delta > \frac{a}{m^2}\left(a^2 - 30a - 29\right) > 0, 
\end{equation*}
as $a \geq 3k > 42$. Taking an optimal division as $\mathcal{C}$ at first, we obtain a contradiction. 
\end{proof}

\begin{lem} \label{lem: 2}
In any division of $G(A)$ with maximal bipartite modularity, every community contains at most one of the bicliques $K_1, K_2, \dots, K_k$.
\end{lem}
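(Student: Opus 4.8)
The plan is to argue by contradiction exactly in the spirit of Lemma~\ref{lem: 1}. Suppose some division $\mathcal{C}$ with maximal bipartite modularity has a community $C$ containing two or more of the bicliques; by Lemma~\ref{lem: 1} each such biclique sits entirely inside $C$, so I may pick one of them, say $K_t$, and form a new division $\mathcal{C}'$ by detaching $K_t$ as its own community and leaving everything else of $\mathcal{C}$ unchanged. Writing $C' = C \setminus V(K_t)$ for the residual community, I will show $Q_b(\mathcal{C}') > Q_b(\mathcal{C})$, contradicting optimality.

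First I would bound the change in the intra-community edge count. The $a^2$ edges internal to $K_t$ stay inside the new community, so the only edges that become inter-community are those joining $K_t$ to the rest of $C$. By the construction (Steps~1--6) a biclique is adjacent only to element vertices, and by Step~3 each element vertex $x_i$ (resp. $y_i$) sends exactly $a_i$ edges into $K_t$; summing over the red (resp. blue) element vertices lying in $C$ gives at most $\sum_i a_i = a$ such edges on each side. Hence at most $2a$ intra-community edges are lost and none are gained, so the first term of $Q_b$ changes by at least $-2a/m$.

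Next I would track the degree-product penalty. Every vertex of $K_t$ has degree $a+1$, so $K_t$ contributes $a(a+1)$ to both the red and the blue degree sum of $C$; a one-line expansion shows that the penalty $\sum_C R_C B_C/m^2$ decreases by exactly $\frac{1}{m^2}\,a(a+1)\bigl(R_C + B_C - 2a(a+1)\bigr)$. Since $C$ was assumed to contain a \emph{second} biclique, $R_C \ge 2a(a+1)$ and $B_C \ge 2a(a+1)$, so this decrease is at least $2a^2(a+1)^2/m^2$. Combining the two estimates,
\begin{equation*}
Q_b(\mathcal{C}') - Q_b(\mathcal{C}) \ \ge\ -\frac{2a}{m} + \frac{2a^2(a+1)^2}{m^2} \ =\ \frac{2a}{m^2}\bigl(a(a+1)^2 - m\bigr),
\end{equation*}
so it remains only to check $a(a+1)^2 > m$. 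Using $a \ge 3k$ (hence $k \le a/3$) in $m = \tfrac{13}{7}ka^2 + 2ka + 9k$ gives $m \le \tfrac{13}{21}a^3 + \tfrac{2}{3}a^2 + 3a < a^3 + 2a^2 + a = a(a+1)^2$ for every relevant instance of 3-PARTITION, which finishes the argument.

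I do not expect any real difficulty here: the bookkeeping is lighter than in Lemma~\ref{lem: 1} because there are no edges between distinct bicliques, so the edge loss is only linear in $a$ while the penalty gain is cubic. The one point I would be careful to justify is precisely that observation — that separating $K_t$ from $C$ costs only the $K_t$-to-element-vertex edges and nothing else — since the whole lemma rests on that cubic-versus-linear gap.
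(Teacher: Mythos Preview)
Your argument is correct and is essentially the paper's own proof: both detach one biclique $K_t$ from a community containing at least two, bound the lost intra-community edges by $2a$, and obtain exactly the same lower bound $\Delta \ge \frac{2a}{m^2}\bigl(a(a+1)^2 - m\bigr)$, then verify positivity. The only cosmetic difference is that the paper tracks the element-vertex index sets $I_R, I_B$ explicitly, whereas you absorb them directly into $R_C, B_C$; also note that your final term-by-term comparison has $3a > a$, so the inequality $\tfrac{13}{21}a^3 + \tfrac{2}{3}a^2 + 3a < a(a+1)^2$ needs the (trivially satisfied) assumption $a \ge 2$ rather than holding coefficient-wise.
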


\if 0
\begin{proof}
Consider an arbitrary optimal division $\mathcal{C}^*$. From Lem.~\ref{lem: 1}, none of the bicliques $K_1, K_2, \dots, K_k$ is divided. Suppose that community $C \in \mathcal{C}^*$ contains $l$ bicliques with $l \geq 2$. If $C$ consists of only such bicliques, then we can improve the objective value by dividing $C$ into each bicluques. This contradicts the optimality of $\mathcal{C}^*$. Therefore, $C$ contains some vertices other than the bicliques. 

The set of indices of the red element vertices in $C$ is denoted by $I_R$, and the same for the blue vertices is denoted by $I_B$. Then, the contribution of $C$ to $Q_b$ is calculated by 
\begin{eqnarray*}
\begin{split}
&\frac{1}{m} \left(a^2 l + l \left(\sum_{i \in I_R} a_i + \sum_{i \in I_B} a_i \right) + |I_R \cap I_B| \right) \\
&-\frac{1}{m^2} \left( (a + 1) al + \sum_{i \in I_R} (k a_i + 2) \right) \left( (a + 1) al + \sum_{i \in I_B} (k a_i + 2) \right)
\end{split}
\end{eqnarray*}

Construct a new division $\mathcal{C}'$ by dividing $C$ into arbitrary one of bicliques $K_t$ and the rest $C'$. Clearly, the increment of the number of the edges within communities due to this transformation is 0. On the other hand, the decrement  is $\sum_{i \in I_R} a_i + \sum_{i \in I_B} a_i$ because all the edges between $K_t$ and the element vertices in $C$ are cut. Thus, the contribution of $K_t$ and $C'$ to $Q_b$ is calculated by 
\begin{eqnarray*}
\begin{split}
&\frac{1}{m} \left(a^2 l + (l - 1) \left(\sum_{i \in I_R} a_i + \sum_{i \in I_B} a_i \right) + |I_R \cap I_B| \right) \\
&-\frac{1}{m^2}\left((a + 1)^2 a^2 + \left( (a + 1) a (l - 1) + \sum_{i \in I_R} (k a_i + 2) \right) \left( (a + 1) a (l - 1) + \sum_{i \in I_B} (k a_i + 2) \right) \right).
\end{split}
\end{eqnarray*}
Thus, we see that 
\begin{eqnarray*}
\Delta&:=&Q_b(\mathcal{C}') - Q_b(\mathcal{C}^*) \\
&=&-\frac{1}{m} \left(\sum_{i \in I_R} a_i + \sum_{i \in I_B} a_i \right) \\
&&+\frac{1}{m^2}\left(2(a + 1)^2 a^2 (l - 1) + (a + 1) a \left( \sum_{i \in I_R} (k a_i + 2) + \sum_{i \in I_B} (k a_i + 2) \right) \right) \\
&=&\frac{1}{m^2}\left(2(a + 1)^2 a^2 (l - 1) + 2(a + 1)a \left(|I_R| + |I_B| \right) + (k(a + 1) - m) \left( \sum_{i \in I_R} (k a_i + 2) + \sum_{i \in I_B} (k a_i + 2) \right) \right)
\end{eqnarray*}
This contradicts the optimality of division $\mathcal{C}^*$. 
\end{proof}
\fi 

\begin{proof}
Let us consider an arbitrary optimal division $\mathcal{C}^*$. From Lem.~\ref{lem: 1}, none of the bicliques $K_1, K_2, \dots, K_k$ is divided. Suppose that $\mathcal{C}^*$ has a community $C$ which contains $l$ of the bicliques with $l \geq 2$. The set of indices of the red element vertices in $C$ is denoted by $I_R$, and the same for the blue vertices is denoted by $I_B$. Additionally, the sum of the degrees of the red vertices other than the $l$ bicliques is denoted by $R$, and the same for the blue vertices is denoted by $B$. Then, the contribution of $C$ to $Q_b$ is calculated by 
\begin{equation*}
\frac{1}{m} \left(a^2 l + l \left(\sum_{i \in I_R} a_i + \sum_{i \in I_B} a_i \right) + |I_R \cap I_B| \right) - \frac{1}{m^2} \left( (a + 1) al + R \right) \left( (a + 1) al + B \right). 
\end{equation*}
Note that $|I_R \cap I_B|$ enumerates the number of the edges between the element vertices $x_i$ and $y_i$ in $C$.

Let us take an arbitrary biclique $K_t$ from the bicliques contained in $C$. Construct a new division $\mathcal{C}'$ by dividing $C$ into $K_t$ and the rest $C'$. Clearly, the increment, due to this transformation, of the number of the edges within communities is 0. On the other hand, the decrement  is given by $\sum_{i \in I_R} a_i + \sum_{i \in I_B} a_i$ because all the edges between $K_t$ and the element vertices in $C$ are cut. Thus, the contribution of $K_t$ and $C'$ to $Q_b$ is calculated by 
\begin{eqnarray*}
\begin{split}
&\frac{1}{m} \left(a^2 l + (l - 1) \left(\sum_{i \in I_R} a_i + \sum_{i \in I_B} a_i \right) + |I_R \cap I_B| \right) \\
&-\frac{1}{m^2}\left((a + 1)^2 a^2 + \left((a + 1) a (l - 1) + R \right) \left((a + 1) a (l - 1) + B\right) \right).
\end{split}
\end{eqnarray*}
Hence, we see that 
\begin{eqnarray*}
\Delta&:=&Q_b(\mathcal{C}') - Q_b(\mathcal{C}^*) \\
&=&-\frac{1}{m} \left(\sum_{i \in I_R} a_i + \sum_{i \in I_B} a_i \right) + \frac{1}{m^2}\left(2(a + 1)^2 a^2 (l - 1) + (a + 1) a (R+B) \right) \\
&=&\frac{1}{m^2}\left(2(a + 1)^2 a^2 (l - 1) + (a + 1)a(R+B) - m \left(\sum_{i \in I_R} a_i + \sum_{i \in I_B} a_i \right) \right) \\
&\geq&\frac{2a}{m^2}\left((a + 1)^2 a - m \right) \\
&\geq&\frac{2ka}{m^2}\left(\frac{8}{7}a^2 + 12k - 6 \right) \\
&>&0.
\end{eqnarray*}
The first inequality follows because we have $l \geq 2$, $R + B \geq 0$, and $\sum_{i \in I_R} a_i + \sum_{i \in I_B} a_i \leq 2a$. This contradicts the optimality of division $\mathcal{C}^*$. 
\end{proof}

\begin{lem} \label{lem: 3}
In any division of $G(A)$ with maximal bipartite modularity, none of the stars $X_1, X_2, \dots, X_{3k}$ and $Y_1, Y_2, \dots, Y_{3k}$ is divided. 
\end{lem}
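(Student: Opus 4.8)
The plan is to follow the same strategy as in the proof of Lemma~\ref{lem: 1}: assume that some star is split in a division with maximal bipartite modularity, and exhibit a local move that strictly increases $Q_b$. Take a division $\mathcal{C}^*$ with maximal $Q_b$ and suppose that the star $X_t$ is divided. Let $C_0 \in \mathcal{C}^*$ be the community containing the (blue) internal vertex $v$ of $X_t$, and for each community $C_j$ let $p_j$ denote the number of red leaves of $X_t$ lying in $C_j$; write $q = \sum_{C_j \neq C_0} p_j \geq 1$ for the number of leaves of $X_t$ not in $C_0$ (which is positive precisely because $X_t$ is divided while $v \in C_0$). Construct $\mathcal{C}'$ from $\mathcal{C}^*$ by moving all leaves of $X_t$ into $C_0$.

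The key structural facts are that each leaf of $X_t$ is a red vertex of degree $1$ whose only neighbor is $v$. Hence, before the move, a leaf lying in a community $C_j \neq C_0$ contributes $0$ to $m_{C_j}$ and $1$ to $R_{C_j}$; after the move, the $q$ edges joining these leaves to $v$ become internal to $C_0$. Thus $m_{C_0}$ and $R_{C_0}$ each increase by $q$, $B_{C_0}$ is unchanged, and each other affected community $C_j$ loses $p_j$ from its red degree with no change to its internal edge count. A direct computation then gives
\begin{equation*}
\Delta := Q_b(\mathcal{C}') - Q_b(\mathcal{C}^*) = \frac{1}{m^2}\left( q\,(m - B_{C_0}) + \sum_{C_j \neq C_0} p_j B_{C_j} \right).
\end{equation*}
Since $q \geq 1$ and every $B_{C_j} \geq 0$, it remains only to verify $B_{C_0} < m$ in order to conclude $\Delta > 0$, contradicting the maximality of $\mathcal{C}^*$.

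To rule out the degenerate case $B_{C_0} = m$, I would invoke Lemmas~\ref{lem: 1} and~\ref{lem: 2}: by Lemma~\ref{lem: 1} no biclique is split, and by Lemma~\ref{lem: 2} at most one of $K_1, K_2, \dots, K_k$ is contained in $C_0$; since we may assume $k \geq 2$, some biclique is entirely disjoint from $C_0$, and its $a$ blue vertices (each of degree $a+1$) lie outside $C_0$, whence $B_{C_0} \leq m - a(a+1) < m$. The claim for the stars $Y_1, Y_2, \dots, Y_{3k}$ then follows by the red/blue symmetry of the construction, that is, by repeating the argument with the roles of red and blue vertices interchanged, which replaces $B_{C_0}$ by $R_{C_0}$. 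I expect the only delicate point to be this careful edge- and degree-bookkeeping together with the exclusion of the $B_{C_0} = m$ (resp.\ $R_{C_0} = m$) corner case via the previous lemmas; the remainder is routine.
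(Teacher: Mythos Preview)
Your proof is correct and takes essentially the same approach as the paper: both argue that a degree-$1$ leaf separated from its internal vertex can be moved into the internal vertex's community with a strict gain in $Q_b$, invoking Lemmas~\ref{lem: 1} and~\ref{lem: 2} to ensure that the receiving community's blue (resp.\ red) degree is strictly less than $m$. The only cosmetic difference is that the paper transfers one leaf at a time whereas you batch all misplaced leaves of $X_t$ into a single move; the resulting inequality and the appeal to the previous lemmas are the same.
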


\begin{proof}
Let us consider an arbitrary optimal division $\mathcal{C}^*$, and an arbitrary leaf $l$. It is sufficient to show that $l$ is contained in the community to which its adjacent vertex belongs. Suppose otherwise, that is, $l$ belongs to a community $C_1$ and its adjacent vertex belongs to another community $C_2$. Now, assume that $l$ is a red leaf. (Note that the following discussion is also applicable to every blue leaf.) Construct a new division $\mathcal{C}'$ by transferring $l$ from $C_1$ to $C_2$. The increment of $Q_b$ is $1/m$ due to the edge between $l$ and its adjacent vertex. On the other hand, the decrement of $Q_b$ is strictly less than $1/m$ because the degree of $l$ is 1 and the sum of the degrees of the blue vertices in $C_2$ is less than $m$ as the set of the blue vertices in the whole network is divided into at least $k$ communities by Lem.~\ref{lem: 2}. Thus, we obtain that $Q_b(\mathcal{C}') > Q_b(\mathcal{C}^*)$, which contradicts the optimality of division $\mathcal{C}^*$.
\end{proof}

\begin{lem} \label{lem: 4}
In any division of $G(A)$ with maximal bipartite modularity,  every red element vertex $x_i$ and the adjacent star $X_i$ are not contained in the same community. A Similar statement holds for every blue element vertex $y_i$ and the adjacent star $Y_i$.
\end{lem}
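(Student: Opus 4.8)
The plan is to argue by contradiction, in the same style as the proofs of Lemmas~\ref{lem: 2} and~\ref{lem: 3}. Let $\mathcal{C}^*$ be an optimal division and suppose that, for some $i$, the red element vertex $x_i$ and the star $X_i$ lie in a common community $C$. By Lemma~\ref{lem: 3} the star $X_i$ is not split, so $C$ contains its blue internal vertex, of degree $a^2/7+1$, together with all $a^2/7$ of its red leaves. The modification I would use is the cheapest available one: \emph{detach $X_i$}, i.e.\ replace $C$ in $\mathcal{C}^*$ by the two communities $X_i$ and $C\setminus X_i$ to obtain a new division $\mathcal{C}'$. Note that $C\setminus X_i$ still contains $x_i$, and the only intra-community edge lost in the process is $x_i$--(internal vertex of $X_i$).

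Next I would evaluate $\Delta:=Q_b(\mathcal{C}')-Q_b(\mathcal{C}^*)$. Writing $R_C$ and $B_C$ for the red and blue degree sums of $C$, the edge term contributes $-1/m$; and since $X_i$ carries red degree $a^2/7$ and blue degree $a^2/7+1$, expanding the change $(a^2/7)(a^2/7+1)+\bigl(R_C-a^2/7\bigr)\bigl(B_C-a^2/7-1\bigr)-R_CB_C$ in the penalty term gives
\begin{equation*}
\Delta=\frac{1}{m^2}\left(R_C\Bigl(\tfrac{a^2}{7}+1\Bigr)+\tfrac{a^2}{7}\,B_C-m-2\cdot\tfrac{a^2}{7}\Bigl(\tfrac{a^2}{7}+1\Bigr)\right).
\end{equation*}
The reason for detaching $X_i$ alone -- rather than $x_i$, or $X_i$ together with $x_i$ -- is precisely that this is the move for which the quadratic penalty $2(a^2/7)(a^2/7+1)$ cancels against the lower bounds available for $R_C$ and $B_C$.

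Indeed, $C$ contains the $a^2/7$ leaves and $x_i$ (of degree $ka_i+2$), so $R_C\ge a^2/7+ka_i+2$, and $C$ contains the internal vertex of $X_i$, so $B_C\ge a^2/7+1$; since both coefficients $a^2/7+1$ and $a^2/7$ are positive, replacing $R_C$ and $B_C$ by these values can only decrease $\Delta$, and after the cancellation one is left with $\Delta\ge\frac{1}{m^2}\bigl((ka_i+2)(a^2/7+1)-m\bigr)$. By rescaling the given instance of 3-PARTITION by a suitable constant (which at the same time arranges $7\mid a^2$), we may assume $a_i\ge 14$ for every $i$; then $(ka_i+2)(a^2/7+1)>14k\cdot a^2/7=2ka^2$, whereas $m=\tfrac{13}{7}ka^2+2ka+9k<2ka^2$ because $a=\sum_j a_j\ge 42$ forces $a^2/7>2a+9$. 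Hence $\Delta>0$, contradicting the optimality of $\mathcal{C}^*$. The statement for $y_i$ and $Y_i$ follows by exchanging the roles of red and blue.

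The main obstacle is bookkeeping rather than insight: one has to track the signs in $\Delta$ carefully enough to be sure that substituting the minimal values of $R_C$ and $B_C$ is a valid underestimate, and one has to pick the right modification, since detaching $x_i$ by itself -- or choosing a split in which the quadratic penalty terms fail to cancel -- yields an estimate that does not close. It is also worth checking explicitly that the rescaling used to force $a_i\ge 14$ is consistent with the normalizations assumed earlier, which it is.
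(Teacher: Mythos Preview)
Your proof is correct and essentially identical to the paper's: the same detachment of $X_i$ from $C$, the same expansion of $\Delta$, and the same lower bound $\Delta\ge\frac{1}{m^2}\bigl((ka_i+2)(a^2/7+1)-m\bigr)$, with only cosmetic differences (you parametrize via $R_C,B_C$ whereas the paper uses $R=R_C-a^2/7$ and $B=B_C-a^2/7-1$, and you take $a_i\ge 14$ where the paper takes $a_i>21$).
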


\begin{proof}
Let us consider an arbitrary optimal division $\mathcal{C}^*$, and an arbitrary red element vertex $x_i$. Suppose that $\mathcal{C}^*$ has a community $C$ which contains both $x_i$ and the adjacent star $X_i$. From Lem.~\ref{lem: 3}, $X_i$ is entirely contained in $C$. Now, the sum of the degrees of the red vertices in $C$ other than $X_i$ is denoted by $R$, and the same for the blue vertices is denoted by $B$. Then, the contribution of $C$ to $Q_b$ is given by 
\begin{equation*}
\frac{m_C}{m} - \frac{1}{m^2} \left(R + \frac{a^2}{7} \right) \left(B + \left(\frac{a^2}{7} + 1 \right) \right), 
\end{equation*}
because $X_i$ consists of one blue vertex with degree $a^2/7 + 1$ and $a^2/7$ red leaves. 

Construct a new division $\mathcal{C}'$ by dividing $C$ into $X_i$ and the rest $C'$. Clearly, the contribution of $X_i$ and $C'$ to $Q_b$ is given by 
\begin{equation*}
\frac{m_C - 1}{m} - \frac{1}{m^2} \left(RB + \frac{a^2}{7} \left(\frac{a^2}{7} + 1 \right) \right).
\end{equation*}
Hence, we see that 
\begin{eqnarray*}
\Delta &:=& Q_b(\mathcal{C}') - Q_b(\mathcal{C}^*) \\
&=& -\frac{1}{m} + \frac{1}{m^2} \left(\left(\frac{a^2}{7} + 1 \right) R + \frac{a^2}{7} B \right).
\end{eqnarray*}
Since $C$ at least contains element vertex $x_i$ other than $X_i$, we obtain that $R \geq k a_i + 2$. Using this inequality and $B \geq 0$, we see that 
\begin{eqnarray*}
\Delta&\geq&-\frac{1}{m} + \frac{1}{m^2} \left(\frac{a^2}{7} + 1 \right) (k a_i + 2).
\end{eqnarray*}
We can assume that $a_i$ is greater than any constant for all relevant instances of 3-PARTITION. Thus, employing $a_i > 21$ for an example, we immediately obtain that $\Delta > 0$. This contradicts the optimality of division $\mathcal{C}^*$. It is easy to see that the above discussion is applicable to every blue element vertex $y_i$ and the adjacent star $Y_i$.
\end{proof}

\begin{lem} \label{lem: 5}
In any division of $G(A)$ with maximal bipartite modularity, every star $X_1, X_2, \dots, X_{3k}$ and $Y_1, Y_2, \dots, Y_{3k}$ itself forms a community.
\end{lem}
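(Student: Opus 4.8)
The plan is to chain together Lemmas~\ref{lem: 3} and~\ref{lem: 4}: the first says a star is never split, the second says a star never shares a community with its unique gateway vertex, and together these force the star to be isolated inside its community, from which a single exchange argument yields that it forms a community on its own. Concretely, I would fix an optimal division $\mathcal{C}^*$ and treat a red-leaf star $X_i$ (the case of $Y_i$ being symmetric, via the second half of Lemma~\ref{lem: 4}). By Lemma~\ref{lem: 3} the whole of $X_i$ lies in one community $C \in \mathcal{C}^*$, and by Lemma~\ref{lem: 4} the element vertex $x_i \notin C$. Since the only edge of $G(A)$ leaving $X_i$ is the one joining its internal vertex to $x_i$, there is no edge between $X_i$ and $C \setminus X_i$; in particular $m_C$ is exactly the number $a^2/7$ of edges inside $X_i$ plus the number of edges inside $C \setminus X_i$.

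Next I would suppose, for contradiction, that $C \neq X_i$, set $C' := C \setminus X_i \neq \emptyset$, and form $\mathcal{C}'$ from $\mathcal{C}^*$ by replacing $C$ with the two communities $X_i$ and $C'$. No within-community edge is destroyed, so the coverage term of $Q_b$ is unchanged and only the penalty term moves. Writing $\rho = a^2/7$ and $\beta = a^2/7 + 1$ for the red and blue degree sums of $X_i$, and $R$, $B$ for those of $C'$, the community $C$ contributed penalty $(R+\rho)(B+\beta)/m^2$ while $X_i$ and $C'$ together contribute $(\rho\beta + RB)/m^2$, whence
\begin{equation*}
\Delta := Q_b(\mathcal{C}') - Q_b(\mathcal{C}^*) = \frac{1}{m^2}\bigl(R\beta + \rho B\bigr).
\end{equation*}
Because $\rho,\beta > 0$ and every vertex of $G(A)$ has degree at least $1$, the nonempty community $C'$ satisfies $R + B > 0$, hence $\Delta > 0$, contradicting the optimality of $\mathcal{C}^*$; so $C = X_i$, and applying the same argument to every $X_i$ and $Y_i$ completes the proof.

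I do not expect a genuine obstacle here, since the substantive work is already packaged into Lemmas~\ref{lem: 3} and~\ref{lem: 4}. The one point that needs care is the appeal to Lemma~\ref{lem: 4}: it is precisely this that guarantees $x_i \notin C$, and hence that no within-community edge is lost when $X_i$ is peeled off. Without it one would instead have to balance a $-1/m$ term against the penalty gain, which is exactly the more delicate bookkeeping already carried out in the proof of Lemma~\ref{lem: 4}.
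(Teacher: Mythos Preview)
Your proposal is correct and follows essentially the same approach as the paper: invoke Lemma~\ref{lem: 3} to keep the star intact in some community $C$, invoke Lemma~\ref{lem: 4} to ensure $x_i \notin C$ so that $X_i$ has no edges to $C \setminus X_i$, and then split $X_i$ off to obtain a strict improvement. The only difference is cosmetic: the paper simply asserts $Q_b(\mathcal{C}') > Q_b(\mathcal{C}^*)$ at the last step, whereas you explicitly compute $\Delta = (R\beta + \rho B)/m^2$ and argue positivity from $C' \neq \emptyset$.
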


\begin{proof}
Let us consider an arbitrary optimal division $\mathcal{C}^*$, and an arbitrary star $X_i$. From Lem.~\ref{lem: 3}, $X_i$ is entirely contained in a community $C$. Therefore, it is sufficient to show that $C$ contains no vertices other than $X_i$. Suppose otherwise, that is, some vertex other than $X_i$ belongs to $C$. Since the only adjacent vertex $x_i$ of $X_i$ is not contained in $C$ from Lem.~\ref{lem: 4}, $X_i$ is not connected with the other vertices in $C$. Constructing a new division $\mathcal{C}'$ by dividing $C$ into $X_i$ and the rest, we obtain that $Q_b(\mathcal{C}') > Q_b(\mathcal{C}^*)$. This contradicts the optimality of $\mathcal{C}^*$. The above discussion also holds for an arbitrary star $Y_i$.
\end{proof}

\begin{lem} \label{lem: 6}
In any division of $G(A)$ with maximal bipartite modularity, every element vertex belongs to one of the communities corresponding to the bicliques $K_1, K_2, \dots, K_k$.
\end{lem}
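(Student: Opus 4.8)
The plan is to mimic the contradiction arguments of the previous lemmas. I would fix an optimal division $\mathcal{C}^*$. By Lemmas~\ref{lem: 1} and~\ref{lem: 2} it has exactly $k$ communities $C_1,\dots,C_k$, each containing exactly one biclique, say $K_t\subseteq C_t$, and by Lemma~\ref{lem: 5} each of the $6k$ stars forms a community on its own, so every remaining community consists only of element vertices. Assuming the claim fails, some element vertex lies in such a community; by the symmetry between red and blue in the construction of $G(A)$ I may take it to be a red element vertex $x_i$ lying in a community $C\notin\{C_1,\dots,C_k\}$, and I will obtain a contradiction by relocating $x_i$.

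The move I would make is to transfer $x_i$ into the biclique community carrying the least blue degree. Write $B_t$ for the sum of the degrees of the blue vertices of $C_t$, pick $t^\ast$ with $B_{t^\ast}=\min_{1\le t\le k}B_t$, and let $\mathcal{C}'$ be obtained from $\mathcal{C}^*$ by moving $x_i$ from $C$ to $C_{t^\ast}$. The blue neighbours of $x_i$ are the $k a_i$ biclique vertices it is joined to, the vertex $y_i$, and the internal vertex of the star $X_i$ (which by Lemma~\ref{lem: 5} lies in its own community); hence at most one edge at $x_i$ was internal to $C$, whereas the $a_i$ edges from $x_i$ into $K_{t^\ast}\subseteq C_{t^\ast}$ become internal, so the number of intra-community edges rises by at least $a_i-1$. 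On the penalty side, moving the red vertex $x_i$ (of degree $k a_i+2$) only decreases $R_C$ and increases $R_{t^\ast}$, so the lone adverse change is the term $(k a_i+2)B_{t^\ast}/m^2$. Hence
\begin{equation*}
\Delta:=Q_b(\mathcal{C}')-Q_b(\mathcal{C}^*)\ \ge\ \frac{a_i-1}{m}-\frac{(k a_i+2)B_{t^\ast}}{m^2}.
\end{equation*}

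To finish, I would bound $B_{t^\ast}\le\frac1k\sum_{t=1}^k B_t$ and note that the star communities already absorb a large share of the total blue degree $\sum_{C\in\mathcal{C}^*}B_C=m$: the stars $X_1,\dots,X_{3k}$ contribute $3k(a^2/7+1)$ and the stars $Y_1,\dots,Y_{3k}$ contribute $3k\,a^2/7$, so $\sum_{t=1}^k B_t\le m-\tfrac67 k a^2-3k=k a^2+2 k a+6k$ and therefore $B_{t^\ast}\le a^2+2a+6$. Substituting, $\Delta>0$ reduces to the inequality $(a_i-1)m>(k a_i+2)(a^2+2a+6)$; its left side has dominant term $\tfrac{13}{7}k a^2 a_i$ and its right side $k a^2 a_i$, so it holds once $a_i$ (hence $a$) is larger than a fixed constant, which we may assume for all relevant instances of 3-PARTITION exactly as in Lemma~\ref{lem: 4}. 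This contradicts the optimality of $\mathcal{C}^*$, and the same estimate with red and blue interchanged handles blue element vertices.

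The only genuinely delicate point is the quantitative comparison at the end: the two terms $\frac{a_i-1}{m}$ and $\frac{(k a_i+2)B_{t^\ast}}{m^2}$ are of the same order of magnitude, so the crude estimate $B_{t^\ast}\le m/k$ would be too weak; one must exploit that the stars carry $\tfrac67 k a^2$ of the blue degree, which is exactly what drags $\sum_t B_t$ down to $k a^2+O(ka)$ and leaves the needed slack of a factor $\tfrac{13}{7}$. Everything else---tracking which edges become internal and the reductions to large instances---is routine and parallels the earlier lemmas.
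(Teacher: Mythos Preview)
Your argument is correct and uses the same core idea as the paper: transfer an element vertex into the biclique community carrying the least blue degree, bound that blue degree by $a^2+2a+6$ via the averaging trick (after subtracting off the $\tfrac67 ka^2+3k$ blue degree absorbed by the $6k$ star communities), and check that the resulting inequality $(a_i-1)m>(ka_i+2)(a^2+2a+6)$ holds for sufficiently large instances.

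The paper proceeds slightly differently: it splits into two cases according to whether the element-only community $C$ contains vertices of both colours or only one. In the two-colour case it first argues that $C$ must reduce to a single pair $\{x_t,y_t\}$ and then \emph{merges} the whole pair into $C_{\min}$, gaining $2a_t$ intra-edges; in the one-colour case it moves a single vertex and gains $a_t$. Your treatment collapses both cases into one move (shift a single $x_i$ regardless of whether $y_i\in C$), at the price of the weaker edge-gain bound $a_i-1$; this costs a little in the final constants but is otherwise harmless and arguably cleaner. The only place to be careful is the endgame inequality: writing out $(a_i-1)m-(ka_i+2)(a^2+2a+6)=\tfrac{6}{7}ka^2a_i+3ka_i-\tfrac{13}{7}ka^2-2ka-9k-2a^2-4a-12$, one sees that positivity needs $a_i\ge 3$ (to beat $\tfrac{13}{7}ka^2$) together with, say, $k\ge 2$ and $a$ large, so your appeal to ``$a_i$ larger than a fixed constant'' is exactly what is required.
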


\begin{proof}
Let us consider an arbitrary optimal division $\mathcal{C}^*$. From Lem.~\ref{lem: 5}, it is sufficient to show that there exists no community consisting of element vertices only. Suppose otherwise, that is, $\mathcal{C}^*$ has a community $C$ which consists of element vertices only. In what follows, we consider the following two cases: when $C$ contains both red and blue element vertices, and when $C$ contains either red or blue element vertices only.

First, we analyze the former case. If there exists a vertex which has no neighbors in $C$, then the objective value can be strictly improved by removing the vertex as a new community. Thus, $C$ consists of some pairs $x_i$ and $y_i$. Note that if $C$ contains two or more such pairs, then we similarly obtain a contradiction. Hence, we see that $C$ consists of only one pair $x_t$ and $y_t$.

From Lem.~\ref{lem: 1} and Lem.~\ref{lem: 2}, there exist communities $C_1, C_2, \dots, C_k$ corresponding to the bicliques $K_1, K_2, \dots, K_k$. In the following, these communities are termed \textit{biclique communities}. Assume that $C_\text{min}$ is one of those communities whose sum of the degrees is minimal. Now, the set of indices of the red element vertices in $C_\text{min}$ is denoted by $I_R$, and the same for the blue vertices is denoted by $I_B$. Then, the contribution of $C$ and $C_\text{min}$ to $Q_b$ is calculated by 
\begin{eqnarray*}
\begin{split}
&\frac{1}{m}\left(a^2 + \sum_{i \in I_R} a_i + \sum_{i \in I_B} a_i + |I_R \cap I_B| + 1\right) \\
&-\frac{1}{m^2}\left(\left((a+1)a + \sum_{i \in I_R}(ka_i + 2)\right) \left((a+1)a + \sum_{i \in I_B}(ka_i + 2)\right) + (ka_t + 2)^2 \right).
\end{split}
\end{eqnarray*}

Construct a new division $\mathcal{C}'$ by merging $C$ and $C_\text{min}$ into one community $C'$. The contribution of $C'$ to $Q_b$ is calculated by 
\begin{eqnarray*}
\begin{split}
&\frac{1}{m}\left(a^2 + \sum_{i \in I_R} a_i + \sum_{i \in I_B} a_i + |I_R \cap I_B| + 1 + 2a_t\right) \\
&-\frac{1}{m^2}\left((a+1)a + \sum_{i \in I_R}(ka_i + 2) + (ka_t + 2)\right) \left((a+1)a + \sum_{i \in I_B}(ka_i + 2) + (ka_t + 2) \right).
\end{split}
\end{eqnarray*}
Thus, we see that 
\begin{eqnarray*}
\Delta&:=&Q_b(\mathcal{C}') - Q_b(\mathcal{C}^*) \\
&=&\frac{2}{m}a_t - \frac{1}{m^2} \left(2(a+1)a(ka_t+2) + (ka_t+2)\left(\sum_{i \in I_R}(ka_i + 2) + \sum_{i \in I_B}(ka_i + 2)\right)\right).
\end{eqnarray*}
Now, we recall that $C_\text{min}$ is the biclique community whose sum of the degrees is minimal. Thus, the sum of the degrees of $C_\text{min}$ is less than or equal to the average of that of all biclique communities. Moreover, no biclique community contained element vertices $x_t$ and $y_t$. Therefore, it holds that  
\begin{equation*}
\sum_{i \in I_R}(ka_i + 2) + \sum_{i \in I_B}(ka_i + 2) \leq \frac{1}{k}(2ka + 12k - 2(ka_t+2)) < 2(a + 6).
\end{equation*}
Using these inequalities, we see that 
\begin{eqnarray*}
\Delta&>&\frac{2}{m^2} \left(ma_t - (ka_t + 2)(a^2 + 2a + 6) \right) \\
&\geq&\frac{2}{m^2} \left(\frac{6}{7}ka^2 + 3k - 2a^2 - 4a - 12 \right) \\
&\geq&\frac{2}{m^2} \left(\frac{4}{7}a^2 - 4a - 3 \right).
\end{eqnarray*}
The second inequality follows because $a_t \geq 1$. The last inequality follows because we can assume that $k > 3$ for all relevant instances of 3-PARTITION. Since $a \geq 3k > 9$, we obtain that $\Delta > 0$. This contradicts the optimality of division $\mathcal{C}^*$.

Next, we analyze the latter case. Assume that $C$ consists of red element vertices only. (Note that the following discussion is also applicable to every community which consists of blue element vertices only.) In this case, we assume that $C_\text{min}$ is one of the biclique communities whose sum of the degrees of the blue vertices is minimal. The set of indices of the red element vertices in $C_\text{min}$ is denoted by $I_R$, and the same for the blue vertices is denoted by $I_B$. Then, the contribution of $C$ and $C_\text{min}$ to $Q_b$ is calculated by 
\begin{eqnarray*}
\begin{split}
&\frac{1}{m}\left(a^2 + \sum_{i \in I_R} a_i + \sum_{i \in I_B} a_i + |I_R \cap I_B|\right) \\
&-\frac{1}{m^2}\left((a+1)a + \sum_{i \in I_R}(ka_i + 2)\right) \left((a+1)a + \sum_{i \in I_B}(ka_i + 2)\right).
\end{split}
\end{eqnarray*}

Construct a new division $\mathcal{C'}$ by transferring an arbitrary vertex $x_t$ from $C$ to $C_\text{min}$. If the corresponding element vertex $y_t$ is not contained in $C_\text{min}$, then the contribution of updated $C$ and $C_\text{min}$ to $Q_b$ is calculated by
\begin{eqnarray*}
\begin{split}
&\frac{1}{m}\left(a^2 + \sum_{i \in I_R} a_i + \sum_{i \in I_B} a_i + |I_R \cap I_B| + a_t \right) \\
&-\frac{1}{m^2}\left((a+1)a + \sum_{i \in I_R}(ka_i + 2) + (ka_t + 2)\right) \left((a+1)a + \sum_{i \in I_B}(ka_i + 2) \right).
\end{split}
\end{eqnarray*}
Note that if $y_t$ is contained in $C_\text{min}$, $1/m$ is added to the contribution. Thus, we see that 
\begin{eqnarray*}
\Delta&:=&Q_b(\mathcal{C}') - Q_b(\mathcal{C}^*) \\
&\geq&\frac{1}{m}a_t - \frac{1}{m^2} \left((a+1)a(ka_t+2) + (ka_t+2)\sum_{i \in I_B}(ka_i + 2)\right).
\end{eqnarray*}
Now, we recall that $C_\text{min}$ is the biclique community whose sum of the degrees of the blue vertices is minimal. Thus, the sum of the degrees of the blue vertices in $C_\text{min}$ is less than or equal to the average of that of all biclique communities. Moreover, no biclique community contained element vertex $x_t$. Therefore, it holds that 
\begin{equation*}
\sum_{i \in I_B}(ka_i + 2) \leq \frac{1}{k} \left(ka + 6k - (ka_t+2) \right) < a + 6.
\end{equation*}
Using these inequalities, we see that 
\begin{eqnarray*}
\Delta&>&\frac{1}{m^2} \left(ma_t - (ka_t + 2)(a^2 + 2a + 6) \right).
\end{eqnarray*}
In the analysis of the former case, we have already shown that $2\Delta > 0$. Thus, repeatedly applying this operation until $C = \emptyset$, we can obtain a new division with a strictly larger objective value. This contradicts the optimality of $\mathcal{C}^*$. 
\end{proof}

\begin{lem} \label{lem: 7}
In any division of $G(A)$ with maximal bipartite modularity, every pair of element vertices $x_i$ and $y_i$ belongs to the same community.
\end{lem}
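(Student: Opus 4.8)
The plan is to argue by contradiction with a single local move, in the spirit of the preceding lemmas. Fix an optimal division $\mathcal{C}^*$. By Lemmas~\ref{lem: 1}, \ref{lem: 2}, and~\ref{lem: 6}, $\mathcal{C}^*$ has exactly $k$ biclique communities, one for each of $K_1, \dots, K_k$, and every element vertex lies in one of them. Suppose some pair $x_i, y_i$ is split: let $C_s$ be the biclique community containing $x_i$ and $K_s$, and $C_t$ the one containing $y_i$ and $K_t$, with $C_s \neq C_t$. I would transfer $x_i$ from $C_s$ into $C_t$, obtaining a division $\mathcal{C}'$, and show $Q_b(\mathcal{C}') > Q_b(\mathcal{C}^*)$.

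First I would record the effect of the transfer on the two parts of $Q_b$. Because $K_s$ and $K_t$ lie entirely in $C_s$ and $C_t$ and $x_i$ has exactly $a_i$ neighbours among the blue vertices of each biclique, the move loses the $a_i$ internal edges from $x_i$ into $K_s$, gains the $a_i$ edges from $x_i$ into $K_t$, and makes the edge $x_i y_i$ internal (the edge from $x_i$ to the internal vertex of $X_i$ remains cut by Lemma~\ref{lem: 5}), for a net gain of one internal edge. Writing $d := ka_i + 2$ for the degree of $x_i$, the move sends $(R_s, R_t)$ to $(R_s - d,\, R_t + d)$ and leaves all $B$-values unchanged, so a one-line computation gives
\begin{equation*}
\Delta := Q_b(\mathcal{C}') - Q_b(\mathcal{C}^*) = \frac{1}{m} - \frac{d\,(B_t - B_s)}{m^2}.
\end{equation*}

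It then remains to check that $d(B_t - B_s) < m$. If $B_t \le B_s$ this is immediate, so assume $B_t > B_s$. The key point is that $C_s$ and $C_t$ both contain an entire biclique, so the bulky $a(a+1)$ contributions to the blue degree sums cancel: the $a$ blue vertices of $K_s$ already give $B_s \ge a(a+1)$, while the blue vertices of $C_t$ are just the $a$ blue vertices of $K_t$ together with a subset of the blue element vertices $y_1, \dots, y_{3k}$ (Lemma~\ref{lem: 5} excludes any star vertex), whence $B_t \le a(a+1) + \sum_{j=1}^{3k}(k a_j + 2) = a(a+1) + ka + 6k$; therefore $B_t - B_s \le ka + 6k$. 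Finally I would use the 3-PARTITION hypothesis $a_i < b/2$ and the identity $a = kb$ to obtain $d = ka_i + 2 < a/2 + 2$, so $d(B_t - B_s) < (a/2 + 2)(ka + 6k) = \tfrac{1}{2}ka^2 + 5ka + 12k$, which is strictly less than $m = \tfrac{13}{7}ka^2 + 2ka + 9k$ for all relevant instances (the difference $\tfrac{19}{14}ka^2 - 3ka - 3k$ is positive). Hence $\Delta > 0$, contradicting optimality. The main obstacle is precisely this bookkeeping: one must keep $B_t - B_s$ down to order $ka$ rather than order $a^2$, which forces using that both communities carry a full biclique, and the constraint $a_i < b/2$ is essential, since otherwise $d$ alone would be of order $ka$ and the bound would collapse.
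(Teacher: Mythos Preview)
Your proof is correct and follows essentially the same approach as the paper's: both argue by contradiction, transfer $x_i$ into the biclique community of $y_i$, compute $\Delta = \tfrac{1}{m} - \tfrac{d(B_t - B_s)}{m^2}$ with $d = ka_i + 2$, bound $B_t - B_s \le ka + 6k$ by cancelling the biclique contributions, use $a_i < b/2$ to get $d < a/2 + 2$, and conclude that $m - (a/2+2)(ka+6k) = \tfrac{19}{14}ka^2 - 3ka - 3k > 0$. Your edge-count bookkeeping (the $\pm a_i$ cancellation plus the gained edge $x_i y_i$) is in fact more explicit than the paper's, which simply asserts the net $1/m$ increment.
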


\begin{proof}
Let us consider an arbitrary optimal division $\mathcal{C}^*$, and an arbitrary pair of element vertices $x_t$ and $y_t$. Suppose that $x_t$ belongs to a community $C_1$ and $y_t$ belongs to another community $C_2$. From Lem.~\ref{lem: 6}, we see that $C_1$ and $C_2$ are both biclique communities. The set of indices of the blue element vertices in $C_1$ and $C_2$ are denoted by $I_{B_1}$ and $I_{B_2}$, respectively. 

Construct a new division $\mathcal{C}'$ by transferring $x_t$ from $C_1$ to $C_2$. The increment of $Q_b$ is $1/m$ due to the edge between $x_t$ and $y_t$. On the other hand, the decrement of $Q_b$ is calculated by 
\begin{equation*}
\frac{1}{m^2} (ka_t + 2) \left(\sum_{i \in I_{B_2}}(ka_i + 2) - \sum_{i \in I_{B_1}}(ka_i + 2)\right),
\end{equation*}
because the degree of $x_t$ is $ka_t + 2$ and the sum of the degrees of the blue vertices in $C_2$ minus that of  $C_1$ is $\sum_{i \in I_{B_2}}(ka_i + 2) - \sum_{i \in I_{B_1}}(ka_i + 2)$. Thus, we see that 
\begin{eqnarray*}
\Delta &:=& Q_b(\mathcal{C}') - Q_b(\mathcal{C}^*) \\
&=&\frac{1}{m} - \frac{1}{m^2} (ka_t + 2) \left(\sum_{i \in I_{B_2}}(ka_i + 2) - \sum_{i \in I_{B_1}}(ka_i + 2)\right) \\
&\geq&\frac{1}{m} - \frac{1}{m^2} (ka_t + 2)(ka + 6k) \\
&>&\frac{1}{m} - \frac{1}{m^2} \left(\frac{a}{2} + 2\right) (ka + 6k) \\
&=&\frac{k}{m^2} \left(\frac{19}{14}a^2 - 3a - 3\right).
\end{eqnarray*}
The first inequality follows because the sum of the degrees of the blue vertices in the whole network is $ka + 6k$. The second inequality follows because we have $a_t < b/2 = a/2k$. Since $a \geq 3k \geq 3$, we immediately obtain that $\Delta > 0$. This contradicts the optimality of $\mathcal{C}^*$.
\end{proof}

So far, we have observed the conditions satisfied by divisions of $G(A)$ with maximal bipartite modularity. Finally, combining these findings, we present our result.

\begin{thm} \label{thm: 1}
BIMODULARITY is NP-complete in the strong sense.
\end{thm}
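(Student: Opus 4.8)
The plan is the following. Membership of BIMODULARITY in NP is immediate: a division $\mathcal{C}$ is a certificate of size polynomial in $n$, and $Q_b(\mathcal{C})$ is computable in polynomial time. Since $G(A)$ is built in time polynomial in $\sum_i a_i$ and all numbers occurring in $(G(A),K(A))$ are polynomially bounded in $\sum_i a_i$, by the strong NP-completeness of 3-PARTITION it then suffices to fix a suitable threshold $K(A)$ and prove the equivalence ``$A$ is a yes-instance of 3-PARTITION $\iff$ $(G(A),K(A))$ is a yes-instance of BIMODULARITY''.

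First I would invoke Lemmas~\ref{lem: 1}--\ref{lem: 7} to reduce optimal divisions to a transparent combinatorial form. By Lemmas~\ref{lem: 1} and~\ref{lem: 2} every biclique stays intact and occupies a community of its own, so an optimal division has exactly $k$ \emph{biclique communities}; by Lemma~\ref{lem: 5} each of the $6k$ stars forms a singleton community; by Lemma~\ref{lem: 6} every element vertex lies in a biclique community; and by Lemma~\ref{lem: 7} the pair $x_i,y_i$ always stays together. Hence an optimal division corresponds to a partition $S_1,\dots,S_k$ of $\{1,\dots,3k\}$ into (possibly empty) blocks, with biclique community $C_j = K_j \cup \bigcup_{i\in S_j}\{x_i,y_i\}$. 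For such a division I would evaluate $Q_b$ directly: writing $\sigma_j=\sum_{i\in S_j}a_i$, $t_j=|S_j|$ and $d_j=\sum_{i\in S_j}(ka_i+2)=k\sigma_j+2t_j$, one gets $m_{C_j}=a^2+2\sigma_j+t_j$ and $R_{C_j}=B_{C_j}=a(a+1)+d_j$ (the gadget is built symmetrically on purpose), while each star community contributes a fixed amount. Because $\sum_j\sigma_j=a$, $\sum_j t_j=3k$ and $\sum_j d_j=ka+6k$ are all partition-independent, the $m_{C_j}/m$ terms and the star terms collapse to a constant $P$, so
\begin{equation*}
Q_b(\mathcal{C}) = P - \frac{1}{m^2}\sum_{j=1}^{k}\bigl(a(a+1)+d_j\bigr)^{2}.
\end{equation*}
By strict convexity of $x\mapsto x^2$, subject to $\sum_j d_j = ka+6k$ this sum is minimized exactly when all $d_j$ are equal, i.e.\ $d_j=a+6$ for every $j$, in which case it equals $k(a^2+2a+6)^2$. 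I would therefore set $K(A)=P-\tfrac{1}{m^2}k(a^2+2a+6)^2$; then no division achieves $Q_b>K(A)$, and $Q_b=K(A)$ is attained precisely when some partition $(S_j)$ has $d_j=a+6$ for all $j$. Consequently $(G(A),K(A))$ is a yes-instance if and only if such an ``equalizing'' partition exists.

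It then remains to show that an equalizing partition exists if and only if $A$ has a 3-partition. One direction is routine: in a 3-partition the hypothesis $b/4<a_i<b/2$ forces each block to have exactly three elements and sum $b$, so $d_j=kb+6=a+6$. The reverse direction is the one delicate point and I would argue number-theoretically: $d_j=a+6$ gives $k\sigma_j+2t_j=kb+6$, hence $2t_j\equiv 6 \pmod{k}$; since we may assume $k$ exceeds any fixed constant, the least nonnegative solution of this congruence is $t_j=3$ and every other solution is at least $\tfrac{k}{2}+3$, so $\sum_j t_j=3k$ over the $k$ blocks forces $t_j=3$ for all $j$, whence $k\sigma_j=kb$ and $\sigma_j=b$. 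Thus every equalizing partition is a genuine 3-partition, closing the equivalence; finally I would note that restricting 3-PARTITION to instances with $k$ and each $a_i$ above the relevant constants, and with $a^2$ a multiple of $7$, remains strongly NP-complete by a standard padding argument, so the construction is a legitimate pseudo-polynomial transformation. The step I expect to demand the most care is making $K(A)$ coincide exactly with the optimum---the explicit computation of $Q_b$ on the structured divisions together with the convexity argument---and, within the converse, the congruence argument that upgrades an equalizing partition to an actual 3-partition.
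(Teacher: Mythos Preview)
Your proposal is correct and matches the paper's proof essentially step for step: invoke Lemmas~\ref{lem: 1}--\ref{lem: 7} to pin down the shape of an optimal division, observe that the edge-fraction contribution is partition-independent so that maximizing $Q_b$ reduces to minimizing $\sum_j R_{C_j}^2 = \sum_j (a(a+1)+d_j)^2$ under the fixed constraint $\sum_j d_j = ka+6k$, take $K(A)$ to be the value attained at the equalized minimum $k(a^2+2a+6)^2$, and check the equivalence with 3-PARTITION. The one local deviation is in the converse step: where you force $t_j=3$ via the congruence $2t_j\equiv 6 \pmod{k}$ together with $\sum_j t_j = 3k$, the paper instead uses the hypothesis $b/4<a_i<b/2$ directly (two element vertices contribute degree sum at most $a+4<a+6$ and four contribute more than $a+8>a+6$), which is a bit more direct but otherwise equivalent.
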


\begin{proof}
Since bipartite modularity for a given division $\mathcal{C}$ can be computed in polynomial time, BIMODULARITY belongs to the class NP. In what follows, we complete the reduction to show the NP-completeness. Recall that it is sufficient to provide appropriate parameter $K(A)$ such that $G(A)$ has a division $\mathcal{C}$ of $V$ which satisfies $Q_b(\mathcal{C}) \geq K(A)$ if and only if $A$ can be partitioned into $k$ sets with sum equal to $b$ each.

From the above observation, an arbitrary optimal division $\mathcal{C}^*$ of $G(A)$ in terms of  maximizing bipartite modularity can be represented as:
\begin{equation*}
\{C_1, C_2, \dots, C_k, X_1, X_2, \dots, X_{3k}, Y_1, Y_2, \dots, Y_{3k}\},
\end{equation*}
where $C_1, C_2, \dots, C_k$ are the biclique communities. We note that every pair of element vertices $x_i$ and $y_i$ belongs to one of the biclique communities. In this situation, the number of the edges within communities is unvarying. More specifically, denoting the number of such edges by $m_\text{intra}$, we have that
\begin{equation*}
m_\text{intra} = m - \left(2a(k-1) + 6k \right), 
\end{equation*}
because the number of the edges between different communities is always exactly $2a(k-1) + 6k$. Thus, we see that division $\mathcal{C}^*$ minimizes $\sum_{i=1}^k R_{C_i}B_{C_i}$. Since $R_{C_i} = B_{C_i}$ for $i=1,2,\dots,k$, it can be replaced by $\sum_{i=1}^k R_{C_i}^2$. Now, the sum of the degrees of the red vertices in all the biclique communities is given by  
\begin{equation*}
\sum_{i=1}^k R_{C_i} = ka(a + 1) + ka + 6k = k(a^2 + 2a + 6). 
\end{equation*}
Hence, the above sum of squares $\sum_{i=1}^k R_{C_i}^2$ has a lower bound: 
\begin{equation*}
k(a^2 + 2a + 6)^2.
\end{equation*}
This is attained if and only if all the sum of the degrees of the red vertices in each biclique community are the same, that is, for $i=1,2,\dots,k$, 
\begin{equation*}
R_{C_i} = \frac{1}{k} \sum_{i=1}^k R_{C_i} = a^2 + 2a + 6.
\end{equation*}

Now, assume that the lower bound is attained by division $\mathcal{C}^*$. Then, we see that the sum of the degrees of the red element vertices in each biclique community is equal to 
\begin{equation*}
a^2 + 2a + 6 - (a + 1)a = a + 6. 
\end{equation*} 
This implies that the number of the red element vertices in each biclique community is exactly three because we have $b/4 < a_i < b/2$ for $i=1,2,\dots,3k$. Thus, for each biclique community, three red element vertices, say $x_s$, $x_t$, and $x_u$, satisfy
\begin{equation*}
(ka_s + 2) + (ka_t + 2) + (ka_u + 2) = a + 6.
\end{equation*}
This leads that $a_s + a_t + a_u = a/k = b$. Therefore, $A$ of 3-PARTITION can be partitioned into $k$ sets with sum equal to $b$ each. 

Conversely, assume that $A$ can be partitioned into $k$ sets with sum equal to $b$ each. Then, we can assign three red element vertices, say $x_s$, $x_t$, and $x_u$, such that $a_s + a_t + a_u = b$ to each biclique community. It is easy to see that the lower bound can be attained by $G(A)$ constructed from such instance $A$. 

From the above, we should take $K(A)$ which is realized when the lower bound of the sum of squares is attained. Thus, we determine $K(A)$ as follows:
\begin{eqnarray*}
K(A) &=& \frac{m_\text{intra}}{m} - \frac{k(a^2 + 2a + 6)^2 + \frac{6}{7}ka^2 \left(\frac{1}{7}a^2 + 1\right)}{m^2} \\
&=&1-\frac{2a(k-1) + 6k}{m}- \frac{k(a^2 + 2a + 6)^2 + \frac{6}{7}ka^2 \left(\frac{1}{7}a^2 + 1\right)}{m^2}.
\end{eqnarray*}
This completes the desired reduction.
\end{proof}

\section{Concluding remarks} \label{sec: concluding_remarks}
In this study, we proved that maximizing bipartite modularity is NP-hard. This is the first computational complexity result for maximizing bipartite modularity. 

It is an interesting future direction to analyze the computational complexity beyond the NP-hardness. Indeed, as for the standard modularity, such computational complexity results have already been shown. For instance, Brandes \textit{et al.}~\cite{BrDeGaGoHoNiWa08} showed that maximizing the standard modularity remains NP-hard even when the number of the communities of an output division is restricted to exactly or at most two. In addition, DasGupta and Desai~\cite{DaDe13} showed that maximizing the standard modularity is APX-hard. This means that unless $\text{P} = \text{NP}$, there exists no polynomial-time approximation algorithm with approximation ratio $1 - \epsilon$ for some constant $\epsilon > 0$. 


\begin{acknowledgments}
The second author is supported by the Grant-in-Aid for JSPS Fellows. 
\end{acknowledgments}

\bibliography{bibfile}

\end{document}